\documentclass[10pt, journal, twocolumn]{IEEEtran}

\usepackage[utf8]{inputenc}		
\usepackage[scr=boondoxo,frak=euler,bb=ams]{mathalfa}
\usepackage{amsmath,amsthm,amsfonts,amssymb} 	
\usepackage{graphicx,graphics}
\usepackage{float}
\usepackage{tikz}
\usetikzlibrary{shapes, snakes, patterns}
\usepackage{xcolor}		

\newcommand{\spheading}[2][5em]{
	\rotatebox{90}{\parbox{#1}{\raggedright #2}}}
\usepackage[export]{adjustbox}
\usepackage[justification=centering]{caption}
\usepackage{subcaption}
\DeclareCaptionLabelSeparator{periodspace}{.\quad}
\usepackage{epstopdf}
\usepackage{enumerate,enumitem} 
\usepackage{cite}
\usepackage{suffix}
\usepackage{mathtools}
\usepackage{tabularx}
\usepackage{booktabs}		
\usepackage{boldline,multirow,diagbox,hhline} 
\usepackage{relsize}
\usepackage{cuted}
\usepackage{titlesec}
\usepackage{chngcntr}
\usepackage{refcount}

\usepackage{textcomp}
\usepackage{algorithmic}
\def\BibTeX{{\rm B\kern-.05em{\sc i\kern-.025em b}\kern-.08em
		T\kern-.1667em\lower.7ex\hbox{E}\kern-.125emX}}

\theoremstyle{definition}

\newtheorem{theorem}{Theorem}[]
\newtheorem{proposition}[theorem]{Proposition}

\newtheorem{definition}[]{Definition}

\newtheorem{remark}[]{Remark}

\newcommand{\PreserveBackslash}[1]{\let\temp=\\#1\let\\=\temp}
\newcolumntype{C}[1]{>{\PreserveBackslash\centering}p{#1}} 
\newcolumntype{R}[1]{>{\PreserveBackslash\raggedleft}p{#1}} 
\newcolumntype{L}[1]{>{\PreserveBackslash\raggedright}p{#1}} 

\newcolumntype{Y}{>{\centering\arraybackslash}b{0.7cm}}
\newcolumntype{M}{>{\centering\arraybackslash}b{2.1cm}}
\newcolumntype{Z}{>{\centering\arraybackslash}b{1.75cm}}
\newcolumntype{G}{>{\centering\arraybackslash}m{1.5cm}}
\newcolumntype{Q}{>{\centering\arraybackslash}b{3.75cm}}
\newcolumntype{A}{>{\centering\arraybackslash}m{1.5cm}}
\newcolumntype{B}{>{\centering\arraybackslash}m{0.6667cm}}
\newcolumntype{T}{>{\centering\arraybackslash}b{6.5ex}}
\newcolumntype{S}{>{\centering\arraybackslash}b{13.25ex}}
\newcolumntype{?}{!{\vrule width 1pt}}
\newcolumntype{+}{?@{\hskip 1pt}?}

\DeclareFontFamily{U}{mathx}{}
\DeclareFontShape{U}{mathx}{m}{n}{<-> mathx10}{}
\DeclareSymbolFont{mathx}{U}{mathx}{m}{n}
\DeclareMathAccent{\widecheck}{0}{mathx}{"71}

\definecolor{bluee}{RGB}{0, 82, 200}
\definecolor{redd}{RGB}{245, 30, 30}
\definecolor{greenn}{RGB}{0, 150, 62}

\makeatletter
\newcommand\footnoteref[1]{\protected@xdef\@thefnmark{\ref{#1}}\@footnotemark}
\makeatother

\newcommand{\mbf}[1]{\boldsymbol{\mathrm{#1}}}

\newcommand{\MleqK}{${}^{\star}_{}$}

\newcommand{\commhide}[1]{}

\newcommand{\psp}{\hspace{0.1em}}
\newcommand{\pspp}{\hspace{0.05em}}
\newcommand{\nsp}{\hspace{-0.1em}}
\newcommand{\nspp}{\hspace{-0.05em}}

\makeatletter
\newcommand{\vast}{\bBigg@{4}} 
\newcommand{\Vast}{\bBigg@{5}} 
\newcommand{\csize}[1]{\bBigg@{#1}} 
\makeatother

\makeatletter
\newcommand{\ostar}{\mathbin{\mathpalette\make@circled\star}}
\newcommand{\make@circled}[2]{\ooalign{$\m@th#1\smallbigcirc{#1}$\cr\hidewidth$\m@th#1#2$\hidewidth\cr}}
\newcommand{\smallbigcirc}[1]{ \vcenter{\hbox{\scalebox{0.77778}{$\m@th#1\bigcirc$}}}}
\makeatother

\newcommand{\summ}{\textstyle\sum\limits}

\newcommand{\neighb}{{\rotatebox[origin=c]{90}{\footnotesize{$\rangle\nsp\langle\!$}}}}
\newcommand{\neighbdsets}{\mathcal{D} \pspp\, \neighb \,\pspp\widecheck{\mathcal{D}}}
\newcounter{tempeqncnt}

\title{On the Optimal Number of Grids for Differentially Private Non-Interactive $K$-Means Clustering}

\author{
	\IEEEauthorblockN{
		Gokularam Muthukrishnan\textsuperscript{*}, Anshoo Tandon\textsuperscript{*}
		\thanks{\noindent\textsuperscript{*}The authors are with the Center of Data for Public Good, Foundation for Science Innovation and Development, Indian Institute of Science, Bengaluru 560012, India (e-mail: gokularam.m@datakaveri.org; anshoo.tandon@gmail.com).
        }}
}

\begin{document}
	
	\maketitle	
    
	\begin{abstract}
        Differentially private $K$-means clustering enables releasing 
        cluster centers derived from a dataset while protecting the privacy of the individuals. 
        Non-interactive 
        clustering
        techniques based on privatized histograms are attractive because the released data synopsis can be reused for other downstream tasks without additional privacy loss. The choice of the number of grids for 
        discretizing 
        the data points is crucial, as it directly controls the 
        quantization
        bias and the amount of noise injected to preserve privacy. The widely adopted strategy selects a grid size that is independent of the number of clusters and also relies on empirical tuning. 
        In this work, we revisit this choice and propose a refined grid-size selection rule 
        derived by minimizing an upper bound on the expected deviation in the $K$-means objective function, 
        leading to a more principled discretization strategy for non-interactive private clustering. 
        Compared to prior work, our
        grid resolution differs both in its dependence on the number of clusters and in the scaling with dataset size and privacy budget.
        Extensive numerical results
        elucidate
        that the proposed strategy results in accurate clustering compared to the state-of-the-art 
        techniques, 
        even under tight privacy budgets.
	\end{abstract}
	
	\begin{IEEEkeywords}
        Differential privacy, 
        $K$-means clustering, 
        Non-interactive data analysis,
        Grid-size selection,
        Laplace mechanism.
	\end{IEEEkeywords}

	\section{Introduction}\label{sec:intro}

    \IEEEPARstart{D}{ata}
    clustering \cite{kevin2022probabilistic} is a fundamental unsupervised machine learning technique that, when applied to datasets 
    with
    attributes from several users, identifies meaningful user groups 
    and reveals
    the latent structures within the data.
    The \textit{$K$-means clustering} 
    is a
    widely adopted 
    clustering method for both low and high-dimensional data (e.g., \cite{von2007tutorial}). However, the user
    attributes often contain sensitive information,
    %
    making it is critical to ensure that no individual's data can be distinguished or re-identified from the clustering.
    Differential privacy (DP) \cite{dwork2014algorithmic} 
    is a strong and widely adopted 
    framework for preserving user privacy \cite{fioretto2025differential}, and over the years, several differentially private clustering algorithms have been developed.
    %
    %
    %
    %

    %
    %
    Ghazi \textit{et al.} \cite{ghazi2020differentially} provided the approximation guarantee for DP clustering.
    Private $K$-means clustering has been studied under the distance-based privacy notion in \cite{epasto2023k} and under the estimation error-based notion in \cite{nguyen2016privacy}. 
    Authors of \cite{imola2023differentially} studied private
    hierarchical clustering of graphs%
    , where privacy of the vertex interactions is preserved.
    A genetic algorithm-based DP $K$-means clustering scheme has been proposed in \cite{zhang2013privgene} 
    and
    a private variant of the $K$-means++ algorithm has been proposed in \cite{nock2016k}. 
    The sample-and-aggregate framework has been used for private clustering in \cite{nissim2007smooth} and \cite{wang2015differentially}, though their approach relied on strong assumptions on the input data.
    In \cite{jones2021differentially} and \cite{nguyen2021differentially}, DP $K$-means algorithms based on maximum coverage were proposed%
    , showing guarantees comparable to \cite{balcan2017differentially}%
    , but the results  quite far from those of the non-private Lloyd’s algorithm%
    . 
    Recently, scalable implementations of private $K$-means clustering for very large datasets have been studied in \cite{cohen2022scalable} and \cite{cohen2022near}.
    In \cite{gilad2025differentially}, cluster-wise description of the dataset is obtained in a differentially private manner.

    Broadly speaking, 
    private clustering 
    algorithms 
    can be classified into two categories, namely, \textit{interactive} and \textit{non-interactive}. 
    Interactive algorithms 
    deplete
    the entire privacy budget for the clustering task, and the data cannot be reused for any other purpose without further degradation of privacy guarantee.
    In contrast, non-interactive algorithms (also known as synopsis-based approaches) first generate a private summary of the data distribution, which can be reused for multiple queries or downstream tasks without additional privacy leakage.

    \textit{DPLloyd}%
    \cite{blum2005practical,mcsherry2009privacy}%
    , the private version of Lloyd algorithm for $K$-means clustering%
    ,
    is the prominent example of interactive schemes. 
    It requires 
    careful distribution 
    of the available privacy budget across iterations: too few iterations prevent convergence, while too many increase the amount of noise injected to preserve privacy. 
    IBM's \texttt{diffprivlib} \cite{holohan2019diffprivlib} implements DPLloyd with a budget allocation strategy proposed in \cite{su2017differentially}%
    , and decreasing exponential budget distribution is used in \cite{dwork2011firm}.
    Another private version of Lloyd algorithm has been proposed in \cite{lu2021differentially}, where the centroids are resampled in each iteration through exponential mechanism and the convergence is guaranteed.

    Chang \textit{et al.} \cite{chang2021locally} proposed a 
    non-interactive
    %
    %
    private clustering 
    scheme
    %
    %
    based on private \textit{coreset} construction
    \cite{feldman2009private,cohen2021new}
    %
    for the local model of DP%
    %
    , where the dataset is 
    privately compressed into a smaller, weighted set of points 
    through recursive locality-sensitive hashing 
    \cite{andoni2017lsh}
    %
    and a classical, non-private clustering algorithm is executed on them%
    ; the framework is later extended to the central model in \cite{chang2021practical}%
    .
    %
    This method outperforms \cite{balcan2017differentially} and scales well for massive datasets, 
    but requires several hyperparameters to be set \textit{a priori}, necessitating extensive parameter tuning
    to achieve best results%
    .
    %

    %
    Uniform grid 
    non-interactive 
    methods partition the data domain into equal-sized grids, providing a simple and efficient structure for releasing private 
    histograms
    \cite{qardaji2013differentially,su2017differentially}.
    %
    %
    One of the most prominent works in this category is the \textit{Extended Uniform Grid $K$-Means} (EUGkM) algorithm \cite{su2017differentially}, which has been shown to outperform interactive DP clustering algorithms 
    in various settings,
    particularly in 
    lower dimensions.
    %
    %
    However, its construct is not directly motivated by the clustering objective, rather 
    adapted 
    from \cite{qardaji2013differentially}, which focused on range queries%
    ; consequently, the selected grid size  
    is independent of the number of clusters%
    , and this often leads to degenerate scenarios where 
    the grid resolution is too coarse to even distinguish individual clusters%
    , especially in the high privacy regime%
    . 
    %
    %
    Furthermore, the 
    grid size
    depends on a hyperparameter obtained through extensive 
    empirical 
    tuning, and may fail to maintain consistent performance across 
    different 
    configurations.
    %

    %
    %
    In this article, we derive the optimal 
    grid size
    for uniform grid non-interactive clustering that effectively balances the 
    discretization bias
    and the amount of noise injected 
    for
    privacy.
    \begin{itemize}
        \item We 
        determine 
        the optimal
        grid 
        size
        that minimizes an upper bound on the expected deviation in the $K$-means objective function. While some existing works 
        \cite{zhang2012functional,chaudhuri2011differentially} 
        consider objective perturbation to guarantee privacy, we \textit{control} the perturbation in the objective function.

        \item Our principled approach 
        effectively 
        addresses 
        EUGkM's limitations %
        and show clear benefits%
        : 
        Our 
        grid size 
        is 
        directly
        dictated by the number of clusters, and it also differs from that of EUGkM in the exponents governing the dataset size and the privacy budget.

        \item Extensive numerical results demonstrate that our solution 
        achieves superior clustering accuracy compared to 
        state-of-the-art techniques, including leading open-source DP clustering implementations.
    \end{itemize}

    \section{Background and Motivation}\label{sec:background}
    

    %
    %
    %
    %
    
    We briefly recapitulate 
    several
    key concepts%
    , in part to establish notations and conventions that will be followed in this article, and motivate our proposed grid selection rule.


 

    \subsection{Differential Privacy}

    Let $\mathcal{D}\in\mathcal{X}^{N}_{}$ denote the dataset that contains data records 
    $\mbf{x}_{l}^{}\in\mathcal{X}$, ${l}\in\mathbb{N}_{N}^{}$,
    collected from $N$ individuals. 
    We term a pair of datasets $(\mathcal{D},\widecheck{\mathcal{D}})$ that differ by only a single data record as neighboring
    (or adjacent) 
    datasets, and we 
    denote
    $\neighbdsets$; 
    %
    %
    %
    %
    %
    %
    in particular, we consider add-remove model 
    %
    \cite{abadi2016deep,zhu2019poission}%
    , where $\widecheck{\mathcal{D}}$ can be constructed by adding or removing 
    a single
    data record from $\mathcal{D}$, i.e., $\big|\mathcal{D} \setminus \widecheck{\mathcal{D}}\big| 
    + \big|\widecheck{\mathcal{D}} \setminus \mathcal{D}\big| = 1$.
    %
    %

    %
    %
    A trusted central curator holds the sensitive dataset $\mathcal{D}$ and is responsible for
    securely releasing the 
    result of the query ${f}:\mathcal{X}^{N}_{}\to\mathcal{Y}$ on dataset $\mathcal{D}$.
    DP guarantee ensures that it is difficult to 
    infer
    any individual's presence in $\mathcal{D}$ from the 
    response 
    by essentially \textit{randomizing} it; 
    the algorithm $\mathcal{M}$ that provides randomized output to the query 
    referred to as the \textit{private mechanism}.

    \begin{definition}[Differential Privacy {\cite{dwork2006calibrating}}]
        A randomized mechanism $\mathcal{M}:\mathcal{X}^{N}_{}\to\mathcal{Y}$ is said to guarantee $\epsilon$-differential privacy%
        \footnote{
        While this work focuses on $\epsilon$- DP (or \textit{pure} DP), a relaxed notion, namely $(\epsilon,\delta)$-DP (or \textit{approximate} DP) \cite{dwork2006our}, allows the indistinguishability bound to hold up to an additive slack $\delta\in[0,1]$.
        }
        \textup{(}$\epsilon$-DP in short\textup{)} if 
        for every 
		measurable set $\mathcal{E}\subseteq\mathcal{Y}$ and every 
		pair of 
		neighboring datasets $\neighbdsets$, 
		\begin{equation} \label{eq:DP}
			\mathbb{P}\{\mathcal{M}(\mathcal{D})  \in \mathcal{E}\} 
			\leq 
			e^{\epsilon}_{}\psp \mathbb{P}\{\mathcal{M}(\widecheck{\mathcal{D}})  \in \mathcal{E}\} 
            ,
		\end{equation}
		where 
		$\epsilon 
        \in\mathbb{R}_+^{}$
		is the \textit{privacy budget}. 
    \end{definition}
	%
    %
    %
    %
    The definition imposes an information-theoretic bound on an adversary's \textit{power} to discern whether the input dataset is $\mathcal{D}$ or $\widecheck{\mathcal{D}}$. 
    %
    %
    %
    The parameter $\epsilon$ quantifies the amount of information that may be revealed about any individual’s presence; smaller values of $\epsilon$ correspond to stronger privacy guarantees.

    When the query 
    output
    is a numeric vector, the straightforward way to randomize it is by adding noise. 
    In the \textit{additive noise mechanism}, DP is achieved by perturbing the query result ${f}(\mathcal{D})\in\mathbb{R}^{M}_{}$ as
    $\mathcal{M}( \mathcal{D} ) = {f}(\mathcal{D}) + \mathbf{Z}$, 
    where $\mathbf{Z}
    \in \mathbb{R}^{M}_{}$ is the noise that is sampled from a known distribution.
    Numerous additive noise mechanisms have been proposed 
    and studied
    in the literature%
    \cite{dwork2006calibrating,
    dwork2006our,
    geng2015staircase,
    geng2020tighttrunclapl,
    liu2018generalized,
    canonne2020discrete,
    sadeghi2022offset,
    muthukrishnan2023grafting,
    alghamdi2022cactus
    }%
    .
    The \textit{Laplace mechanism} \cite{dwork2006calibrating} achieves 
    DP
    by adding noise sampled from a Laplace distribution, with appropriately calibrated scale, to each coordinate of the response.
    Because of the exponential tails of the noise distribution, the Laplace mechanism
    can 
    ensure
    $\epsilon$-DP 
    with \textit{bounded privacy loss}
    \cite{tian2018selective}.

    \subsection{Private Histogram Query}

    %
    %
    %
    %
    %
    %
    %
    %
    Consider the setting where the data domain $\mathcal{X}$ is partitioned into $M$ disjoint bins.     
    The \textit{histogram query} \cite[E.g., 3.2]{dwork2014algorithmic} requests the number of data entries in the dataset that fall within each of the bins: the $j$-th coordinate of 
    the response ${f}(\mathcal{D)}\in\mathbb{Z}_{+}^{M}$ represents the number of data entries in $\mathcal{D}$ of \textit{type} $j \in \mathbb{N}_{M}^{}$.
    %
    %
    %
    %
    %
    %
    Note that the addition or deletion of a single data record can affect the count in exactly one bin, i.e., for neighbouring datasets $\neighbdsets$, ${f}(\widecheck{\mathcal{D}})={f}(\mathcal{D})\, \pm \, \mbf{e}_{M,\psp j}^{}$ for some $j \in \mathbb{N}_{M}^{}$.

    The direct release of histogram counts may risk disclosing sensitive information, particularly about individuals 
    belonging to 
    rare types.
    %
    %
    %
    %
    %
    %
    %
    %
    %
    The privatization of histogram queries is primarily achieved through the Laplace Mechanism.%
    \footnote{\label{fn:discrete_Laplace}%
    %
    %
    While the discrete Laplace 
    mechanism is the optimal $\epsilon$-DP mechanism for histogram queries \cite{ghosh2009universally,gupte2010universally,geng2016optimalstaircase}, which also ensures
    that perturbed outputs are integer-valued 
    and provides significant utility gains over Laplace 
    in the high-privacy regime (i.e., as $\epsilon \to 0$), 
    we consider the Laplace mechanism in this article to maintain consistency with \cite{su2017differentially}: Extending our analysis to the discrete 
    counterpart
    will be investigated as a part of 
    future work.
    }
    Specifically, 
    to guarantee $\epsilon$-DP, independent noise sampled from $\mathcal{L}\big(0,\frac{1}{\epsilon}\big)$ is added 
    to each bin count.

    \begin{remark}
    \label{rem:error_counts}
    Due to inherent
    structural disjointness
    of the histogram query,
    the scale of 
    Laplace
    noise 
    required for $\epsilon$-DP
    is independent of the number of bins 
    as well as the size of the dataset. 
    Consequently, 
    the Laplace mechanism
    offers fairly high accuracy:
    The probability that any bin count 
    deviates 
    larger than $\frac{1}{\epsilon}\log\big(\frac{M}{\tau}\big)$ is at most $\tau$ 
    %
    \cite[Thm. 3.8]{dwork2014algorithmic}, 
    %
    i.e., 
    the magnitude of the error scales only logarithmically with the number of bins.
    %
    %
    %
    %
    \end{remark}

    Histograms provide a granular summary of the data distribution, and, once privatized, 
    they can be reused arbitrarily many times without further privacy leakage.
    Hence, privatized histograms facilitate 
    \textit{non-interactive analyses}, where the released private data synopses are reused for subsequent queries on the dataset and other downstream tasks (
    e.g., 
    differentially private selection 
    and synthetic data generation
    \cite{dwork2014algorithmic,fioretto2025differential}%
    ).
    In this work, we 
    focus on 
    clustering data points based on their
    noisy histograms, specifically addressing the 
    optimal grid resolution needed 
    to balance discretization (or quantization) error and the error in the bin counts 
    due to 
    injected noise (see Remark \ref{rem:error_counts}).


    \subsection{Private $K$-Means Clustering}
    %
    %
    Data Clustering is an unsupervised learning task of categorizing the points in the given dataset $\mathcal{D}=\{ \mbf{x}_{l}^{}\in
    \mathbb{R}^d_{}
    , \, {l}
    \in \mathbb{N}_{N}^{}
    \}$ into $K\leq N$ groups called clusters, $\left\{ \mathcal{C}_i^{} \subset \mathbb{N}_{N}^{} \right\}_{i=1}^K$,
    such that points within the same cluster are more `similar' to each other than to points in different clusters.
    The \textit{$K$-means clustering} \cite{kevin2022probabilistic} achieves such a partition by minimizing the average within-cluster sum of squares (WCSS), i.e., 
    \begin{equation}\label{eq:obj_kmeans}
        \min_{\mathcal{C}_1^{},\, \ldots,\, \mathcal{C}_{K}^{}}^{}
        \,
        \frac{1}{K}
        \sum_{i=1}^{K} J_i^{}
        ,
    \end{equation}
    where 
    ${J}_i^{} = \sum_{\mbf{x}_{l}^{} \in \mathcal{C}_i^{}} \Vert \mbf{x}_{l}^{} - \mbf{\mu}_{i}^{}\Vert^2_{}$ 
    and 
    $\mbf{\mu}_{i}^{} = \frac{1}{|\mathcal{C}_i^{}|} \sum_{\mbf{x}_{l}^{} \in \mathcal{C}_i^{}}^{} \mbf{x}_{l}^{}$
    are respectively the WCSS and centroid corresponding to the $i$-th cluster $\mathcal{C}_{i}^{}$, $i\in\mathbb{N}_{K}^{}$.
    %
    %
    
    %
    %
    %
    \textit{Lloyd's algorithm} is the most common technique for solving the $K$-means problem. When initialized with a set of $K$ centroids, it iteratively assigns each data point to its nearest centroid and then updates 
    each cluster’s centroid as the mean of the points assigned to it.
    For differentially private $K$-means clustering, we consider bounded data, i.e., $\mbf{x}_{l}^{}\in [-r,r]^d_{} \,\ \forall \, {l} \in \mathbb{N}_{N}^{}$, and the goal is to release the $K$ cluster centroids such that 
    any single data point has a bounded influence on them.
    
    
    \subsubsection{DPLloyd}
    %
    %
    %
    \textit{DPLloyd}
    \cite{blum2005practical,mcsherry2009privacy}
    is the popular interactive $K$-means clustering approach based on the standard Lloyd's algorithm. During each iteration, 
    %
    noise is injected 
    into the counts and the 
    sums of data points assigned to each cluster before updating the centroids. The main drawback of DPLloyd is that the privacy budget $\epsilon$ must be split across a fixed number of iterations; consequently, the noise added in each step increases with the number of iterations, leading to performance degradation without achieving convergence \cite{su2017differentially}.
    
    
    
    \subsubsection{Non-Interactive DP $K$-Means Clustering}  
    %
    %
    In this work, we 
    consider
    non-interactive
    approach to private $K$-means clustering. The data domain $[-r,r]^d_{}$ is partitioned into $M$ 
    equal-sized grids, and the histogram comprising the number of data points falling into each grid is obtained. Once the grid counts are privatized through Laplace mechanism,\footnoteref{fn:discrete_Laplace} weighted Lloyd's algorithm is executed on the grid centres using the corresponding noisy counts as weights.
    The choice of $M$ is critical%
    : 
    A larger $M$ 
    %
    provides a finer resolution, thereby reducing
    the \textit{bias} in the centroid estimates
    but increases the \textit{variance} as the dimension of the noise vector increases.

    The \textit{Extended Uniform Grid $K$-Means} (EUGkM) 
    algorithm 
    \cite{su2017differentially} 
    suggests selecting
    the number of grids as
    \begin{equation}\label{eq:M_EUGkM}
        M=\left(\frac{N\epsilon}{10}\right)^{\frac{2d}{2+d}}_{}
        ,
    \end{equation}
    generalizing the uniform gridding framework 
    developed for two-dimensional geospatial data
    \cite{qardaji2013differentially}%
    .
    While this offers a structured approach for private histogram release, the derivation 
    %
    was primarily focused on
    optimizing 
    %
    range query's accuracy
    and relies 
    on 
    hyperparameter tuning rather than 
    %
    being directly informed by 
    the cluster structure underlying the data.
    Consequently, this choice of $M$ remains independent of the number of clusters $K$, which can lead to suboptimal 
    or even degenerate 
    %
    results 
    in many
    scenarios%
    , such as when the grid resolution is too coarse to distinguish $K$ distinct centroids (i.e., $M < K$)%
    .
    

    We propose a refined grid-size selection rule that explicitly accounts for the number of clusters $K$. Specifically, we determine the optimal number of grids by minimizing 
    %
    an upper bound on the 
    expected deviation in the $K$-means objective function 
    %
    arising from
    both discretization error and the noise introduced to ensure privacy. 
    In particular, we show that the optimal grid 
    resolution 
    scales as 
    $
    M = \Theta\Big( (N \epsilon)^{\frac{2d}{4+d}} \, K^{\frac{4}{4+d}} \Big)
    $%
    ,
    %
    differing from the EUGkM rule
    %
    %
    in both its explicit dependence
    on
    $K$
    %
    and
    the exponents governing 
    $N$
    and
    $\epsilon$.
    We term this grid selection strategy as \textit{RUGNIK} (Refined Uniform Gridding for Non-Interactive $K$-means clustering).
    %
    %
    The derivation follows.


    \section{RUGNIK Derivation} \label{sec:derivation}

    Consider a cluster $\mathcal{C}_i^{}$, $i\in \mathbb{N}_K^{}$, with centroid 
    $\mbf{\mu}_i^{}
    $ and the corresponding WCSS ${J}_i^{}$. 
    The data domain $[-r,r]^d_{}$ is partitioned uniformly into $M = m^d_{}$ grids. 
    %
    %
    %
    %
    Let $\mathcal{G}_i^{}$ denote the set of all grids associated%
    \footnote{We say the grid $g$ with centre $\mbf{s}\in\mathbb{R}^d_{}$ is associated to cluster $\mathcal{C}_i^{}$ if $ \Vert \mbf{s} - \mbf{\mu}_i^{} \Vert < \Vert \mbf{s} - \mbf{\mu}_{k}^{} \Vert \,\ \forall\, {k}\in\mathbb{N}_{K}^{} \setminus \{i\}$.}
    to cluster $\mathcal{C}_i^{}$, and let $g_{ij}^{}\in \mathcal{G}_i^{}$, $j \in \mathbb{N}_{|\mathcal{G}_i^{}|}^{}$, denote the $j$-th grid, with centre $\mbf{s}_{ij}^{}\in\mathbb{R}^d_{}$, that is associated
    to cluster $\mathcal{C}_i^{}$.
    Also, let $t_{ij}^{}$ denote the number of data points falling within the grid $g_{ij}^{}$, $t_{ij}^{} = |\{\mbf{x}_q^{}\in g_{ij}^{}\}|$, and let $n_{ij}^{}$ be the corresponding noisy grid counts,
    $n_{ij}^{} = t_{ij}^{} + Z_{ij}^{}$, where $Z_{ij}^{} \sim \mathcal{L}\big(0,\frac{1}{\epsilon}\big)$. 
    Hence, the \textit{observed} WCSS for cluster $\mathcal{C}_i^{}$ is
    \begin{align*}
        \widetilde{J}_i^{} 
        =  \sum_{g_{ij}^{} \in \mathcal{G}_i^{}}^{} n_{ij}^{} \, \Vert \mbf{s}_{ij}^{} - \mbf{\mu}_i^{}\Vert^2_{}
        =  \sum_{g_{ij}^{} \in \mathcal{G}_i^{}}^{} (t_{ij}^{} + Z_{ij}^{}) \, \Vert \mbf{s}_{ij}^{} - \mbf{\mu}_i^{}\Vert^2_{}
        \\
        \implies
        \widetilde{J}_i^{} 
        =  \sum_{g_{ij}^{} \in \mathcal{G}_i^{}}^{} \sum_{\mbf{x}_q^{} \in g_{ij}^{}}^{} \Vert \mbf{s}_{ij}^{} - \mbf{\mu}_i^{}\Vert^2_{}
        + \sum_{g_{ij}^{} \in \mathcal{G}_i^{}} Z_{ij}^{} \, \Vert \mbf{s}_{ij}^{} - \mbf{\mu}_i^{}\Vert^2_{}
        .
    \end{align*}
    Expanding $J_i^{}$ as 
    $
    \sum_{g_{ij}^{} \in \mathcal{G}_i^{}} \sum_{\mbf{x}_q^{} \in g_{ij}^{}}^{} \Vert\mbf{x}_q^{} - \mbf{\mu}_i^{}\Vert^2_{}$, the deviation in the WCSS of the cluster $\mathcal{C}_i^{}$
    can be expressed as
    %
    \begin{align*}
        \widetilde{J}_i^{} - J_i^{}
        &
        = 
        \sum_{g_{ij}^{} \in \mathcal{G}_i^{}}^{} Z_{ij}^{}\psp \Vert \mbf{s}_{ij}^{} - \mbf{\mu}_i^{}\Vert^2_{}
        \\ & \qquad \quad \ \ 
        - \sum_{g_{ij}^{} \in \mathcal{G}_i^{}}^{} \sum_{\mbf{x}_q^{} \in g_{ij}^{}}^{} \big( \Vert\mbf{x}_q^{} - \mbf{\mu}_i^{}\Vert^2_{} -\Vert\mbf{s}_{ij}^{} - \mbf{\mu}_i^{}\Vert^2_{} \big) 
        ,
    \end{align*}
    where the 
    second term corresponds to the deviation due to quantization.

    Let $\mbf{b}_{q,ij}^{} \triangleq \mbf{x}_q^{} - \mbf{s}_{ij}$
    and
    $\mbf{Y}_{ij}^{} \triangleq \mbf{s}_{ij}^{} - \mbf{\mu}_i^{}$. 
    Thus, $
        \Vert\mbf{x}_q^{}\Vert^2_{} 
        = \Vert\mbf{s}_{ij}^{} + \mbf{b}_{q,ij}^{}\Vert^2_{} 
        = \Vert\mbf{s}_{ij}^{}\Vert^2_{} + 2\, (\mbf{b}_{q,ij}^{})^{\top}_{} \mbf{s}_{ij}^{} + \Vert\mbf{b}_{q,ij}^{}\Vert^2_{}
        \implies
        \Vert\mbf{x}_q^{} - \mbf{\mu}_i^{}\Vert^2 - \Vert\mbf{s}_{ij}^{} - \mbf{\mu}_i^{}\Vert^2_{}
        = \Vert\mbf{x}_q^{}\Vert^2_{} - \Vert\mbf{s}_{ij}^{}\Vert^2_{} - 2\,(\mbf{x}_q^{} - \mbf{s}_{ij}^{})^{\top}_{} \mbf{\mu}_i^{}
        = 2\, (\mbf{b}_{q,ij}^{})^{\top}_{} \mbf{Y}_{ij}^{} +  \Vert\mbf{b}_{q,ij}^{}\Vert^2_{}
    $.
    The space $[-r,r]^d_{}$ contains $K$ clusters. Assuming equally sized and uniformly spread clusters, each cluster is (roughly) contained in a hypercube of side $\frac{2r}{K^{1/d}_{}}$.
    Hence, 
    
    
    \begin{equation}\label{eq:Y_ij}
    \Vert\mbf{Y}_{ij}^{}\Vert_{\infty}^{} \lesssim \frac{r}{K^{1/d}_{}} \implies \Vert\mbf{Y}_{ij}^{}\Vert_2^{} \lesssim \frac{r}{K^{1/d}_{}} \sqrt{d}
    .    
    \end{equation}

    \begin{table*}[!t]
        \normalsize
        %
        \setcounter{tempeqncnt}{\value{equation}}
        \setcounter{equation}{\getrefnumber{eq:logical-start}-1}
        \begin{equation}\label{eq:T1T2T3}
            T_1^{} \triangleq 
            \mathbb{E}\Bigg[\Bigg|
            \sum_{i=1}^K
            \nsp
            \sum_{g_{ij}^{}\in\mathcal{G}_i^{}}^{}
            \nsp 
            Z_{ij}^{}\psp \Vert\mbf{Y}_{ij}^{}\Vert^2_{}\Bigg|\Bigg]
            , \ \
            T_2^{} \triangleq 
            \mathbb{E}\Bigg[\Bigg|
            \sum_{i=1}^K
            \nsp
            \sum_{g_{ij}^{}\in\mathcal{G}_i^{}}^{}
            \nspp
            \sum_{\mbf{x}_q^{} \in g_{ij}^{}}^{}
            \nsp
            (\mbf{b}_{q,ij}^{})^{\top}_{} \mbf{Y}_{ij}^{}\Bigg|\Bigg]
            , \ \ \text{and} \ \
            T_3^{} \triangleq 
            \sum_{i=1}^K
            \nsp
            \sum_{g_{ij}^{}\in\mathcal{G}_i^{}}^{} 
            \nspp
            \sum_{\mbf{x}_q^{} \in g_{ij}^{}}^{}
            \nsp  \mathbb{E}\big[\Vert\mbf{b}_{q,ij}^{}\Vert^2_{}\big]
        \end{equation}
        %
        \setcounter{equation}{\getrefnumber{eq:logical-next}-1}
        \begin{equation}\label{eq:T3_simplification}
            \mathbb{E}\Bigg[\Bigg|
            \sum_{i=1}^K
            \nsp
            \sum_{g_{ij}^{}\in\mathcal{G}_i^{}}^{}
            \nspp
            \sum_{\mbf{x}_q^{} \in g_{ij}^{}}^{}
            \nsp
            (\mbf{b}_{q,ij}^{})^{\top}_{} \mbf{Y}_{ij}^{}\Bigg|\Bigg]
            \leq
            \sqrt{\nspp\mathbb{E}\!\left[\!\Bigg(\nsp
            \sum_{i=1}^K
            \nsp
            \sum_{g_{ij}^{}\in\mathcal{G}_i^{}}^{}
            \nspp
            \sum_{\mbf{x}_q^{} \in g_{ij}^{}}^{} 
            \nsp
            (\mbf{b}_{q,ij}^{})^{\top}_{} \mbf{Y}_{ij}^{}
            \nsp \Bigg)^{\!\nspp2\pspp}_{}\right]} 
            \approx 
            \sqrt{\nspp\mathbb{E}\!\left[\!\Bigg(\nsp 
            \sum_{i=1}^K
            \nsp
            \sum_{g_{ij}^{}\in\mathcal{G}_i^{}}^{}
            \nspp
            \sum_{\mbf{x}_q^{} \in g_{ij}^{}}^{} 
            \nsp 
            (\mbf{b}_{q,ij}^{})^{\top}_{} \mbf{Y}_{ij}^{(0)}
            \nsp \Bigg)^{\!\nspp2\pspp}_{}\right]} 
        \end{equation}
        %
        \setcounter{equation}{\value{tempeqncnt}}
        \hrulefill
    \end{table*} 
    
    Now, the deviation in the objective function of the $K$-means clustering, 
    $\nu =
    \frac{1}{K}
    \big| \sum_{i=1}^{K}  \widetilde{J}_i^{}- \sum_{i=1}^{K} J_i^{} \big|$,
    can be expressed as
    \begin{align*}
    \nu
    &
    \nspp
    = 
    \nspp
    \frac{1}{K}\nspp
    \left|
    \sum_{i=1}^K
    \nsp
    \sum_{g_{ij}^{}\in\mathcal{G}_i^{}}^{}
    \nsp
    Z_{ij}^{}\psp\Vert\mbf{Y}_{ij}\Vert^2_{}
    - 2
    \sum_{i=1}^K 
    \nsp
    \sum_{g_{ij}^{}\in\mathcal{G}_i^{}}^{}
    \nspp
    \sum_{\mbf{x}_q^{} \in g_{ij}^{}}^{} 
    \nsp
    (\mbf{b}_{q,ij}^{})^{\top}_{} \mbf{Y}_{ij}^{}
    \right.
    \\ & \qquad \qquad \qquad \qquad \qquad \qquad \ \
    \left.
    - \sum_{i=1}^K
    \nsp
    \sum_{g_{ij}^{}\in\mathcal{G}_i^{}}^{}
    \nspp
    \sum_{\mbf{x}_q^{} \in g_{ij}^{}}^{}
    \nsp
    \Vert\mbf{b}_{q,ij}^{}\Vert^2_{}
    \right|
    .
    \end{align*}
    Therefore, 
    \begin{equation*}\label{eq:expected_nu}
    \mathbb{E}[\nu] 
    \leq  
    \frac{1}{K}\big(
    T_1^{} + 2\,T_2^{} + T_3^{}
    \big)
    ,
    \end{equation*}
    \refstepcounter{equation}%
    \label{eq:logical-start}%
    where 
    $T_1^{}$, $T_2^{}$, and $T_3^{}$ are given in \eqref{eq:T1T2T3} and
    the expectation is with respect to noise and data distribution.
    
    We now bound each term in the above summation in terms of $m$, $N$, $\epsilon$, $K$, and $d$.
    Firstly, observing that
    \begin{equation*}
    \Bigg|\sum_{i=1}^K
    \nsp
    \sum_{g_{ij}^{}\in\mathcal{G}_i^{}}^{}
    \nsp
    Z_{ij}^{}\psp\Vert\mbf{Y}_{ij}^{}\Vert^{2}_{} \Bigg| 
    \lesssim \frac{dr^{2}}{K^{2/d}} \Bigg|
    \sum_{i=1}^K
    \nsp
    \sum_{g_{ij}^{}\in\mathcal{G}_i^{}}^{}
    \nsp
    Z_{ij}^{}\Bigg|
    \end{equation*}
    and noting that
    \begin{align*}
    \mathbb{E}\Bigg[\Bigg|
    \sum_{i=1}^K
    \nsp
    \sum_{g_{ij}^{}\in\mathcal{G}_i^{}}^{}
    \nsp
    Z_{ij}^{}\Bigg|\Bigg]
    & 
    \leq
    \sqrt{\nspp\mathbb{E}\!\left[\!\Bigg(\nsp
    \sum_{i=1}^K
    \nsp
    \sum_{g_{ij}^{}\in\mathcal{G}_i^{}}^{}
    \nsp
    Z_{ij}^{} \nsp \Bigg)^{\!\nspp2\pspp}_{}\right]} 
    =\frac{\sqrt{2M}}{\epsilon}
    \end{align*}
    as a consequence of Jensen’s inequality, we have
    \begin{equation}\label{eq:T1}
        T_1^{} \lesssim \frac{dr^{2}}{K^{2/d}}\frac{\sqrt{2M}}{\epsilon}
        .
    \end{equation}
    Before proceeding further with the simplification of the remaining terms, we make a few assumptions on the data distribution.
    We assume that the data points $\{ \mbf{x}_{l}^{}\}_{l=1}^{N}$  are independent, and within each grid, the points falling in that grid are roughly uniformly distributed. Furthermore, the $i$-th cluster centroid $\mu_{i}\approx\mu_{i}^{(0)}$, where $\mu_{i}^{(0)}$ is the ground truth centroid. Hence, $[\mbf{b}_{q,ij}]_{l}\sim \mathcal{U}\big(-\frac{r}{m},\frac{r}{m}\big)$, and $\mbf{Y}_{ij}^{} \approx \mbf{Y}_{ij}^{(0)} \triangleq \mbf{s}_{ij}^{} - \mbf{\mu}_i^{(0)}$.%
    %
    \refstepcounter{equation}
    \label{eq:logical-next}
    
    
    
    Under these considerations, we deduce \eqref{eq:T3_simplification}%
    , and hence, we have
    $
        T_2^{}
        \lesssim
        \nsp\left(
        \mathbb{E}\nsp\left[\left(
            \summ_{i=1}^K \summ_{g_{ij}^{}\in\mathcal{G}_i^{}}^{} \summ_{\mbf{x}_q^{} \in g_{ij}^{}}^{}  
            (\mbf{b}_{q,ij}^{})^{\top}_{} \mbf{Y}_{ij}^{(0)}
            \right)^{\!2\psp}_{}\right]
        \right)^{\!1/2}_{}
        =
        \!\bigg(
        \summ_{i=1}^K \summ_{g_{ij}^{}\in\mathcal{G}_i^{}}^{} \summ_{\mbf{x}_q^{} \in g_{ij}^{}}^{} 
            \mathbb{E}\nsp\left[\Big( 
            (\mbf{b}_{q,ij}^{})^{\top}_{} \mbf{Y}_{ij}^{(0)}
            \Big)^{\nsp2\psp}_{}\right]
        \!\bigg)^{\!1/2}_{}
    $%
    . Furthermore, 
    $
    \mathbb{E}\nsp\left[\Big(
    (\mbf{b}_{q,ij}^{})^{\top}_{} \mbf{Y}_{ij}^{(0)}
    \Big)^{\nsp2\psp}_{}\right]
    \leq
    \mathbb{E}\nsp\left[\big\Vert\mbf{b}_{q,ij}^{}\big\Vert^{2}_{} \big\Vert\mbf{Y}_{ij}^{(0)}\big\Vert^{2}_{}\right]
    \lesssim
    \dfrac{dr^2_{}}{K^{2/d}_{}}\dfrac{dr^2_{}}{3m^2_{}}
    $%
    . Thus,
    we have
    \begin{equation}\label{eq:T3}
        T_2^{}
        \lesssim
        \sqrt{\frac{N}{3}}\frac{dr^2_{}}{m K^{1/d}_{}}    
        .
    \end{equation}
    %
    %
    %
    Additionally, 
    we have  
    \begin{equation}\label{eq:T2}
        \mathbb{E}\big[\Vert\mbf{b}_{q,ij}^{}\Vert^2_{}\big] = \frac{dr^2_{}}{3m^2_{}}
        \implies
        T_3^{}=N\frac{dr^2_{}}{3m^2_{}}
        .
    \end{equation}
    
    Therefore, from \eqref{eq:T1}, \eqref{eq:T2}, and \eqref{eq:T3},
    we have
    $\mathbb{E}[\nu] 
    \lesssim
    \frac{dr^2_{}}{K}\pspp
    h(m)
    $, where
    $
        h(m)=
        \frac{\sqrt{2} m^{d/2}_{}}{\epsilon K^{2/d}_{}}
        +\frac{2}{m K^{1/d}_{}}\sqrt{\frac{N}{3}}
        +\frac{N}{3 m^2_{}}
    $.
    Since $h(m)$ is a convex function%
    \footnote{%
    Since $
    h''(m)=
    \Big(\frac{d}{2}-1\Big) \frac{dm^{d/2-2}_{}}{\sqrt{2}\epsilon K^{2/d}_{}}
    + \frac{4}{m^3_{}K^{1/d}_{}}\sqrt{\frac{N}{3}} 
    + \frac{2N}{m^4_{}}
    \geq 0 
    \,\ \forall\, 
    d\geq 2$.
    }
    in $m$, the solution to $h'(m)=0$ minimizes the upper bound on the expected deviation in the objective function of the $K$-means clustering.
    Moreover,
    $
    h'(m)=\frac{d}{\sqrt{2}\epsilon K^{2/d}_{}}\frac{1}{m^3_{}} \xi(m)
    $, where
    \begin{equation*}
        \xi(m)=
        %
        m^{d/2+2}_{} - \rho K^{1/d}_{} m - \rho {\sqrt{\frac{N}{3}}} K^{2/d}_{} 
        , \quad
        \rho = \frac{\epsilon}{d}\sqrt{\frac{8N}{3}}
        ,
    \end{equation*}
    and hence, the optimal $m$ solves $\xi(m)=0$.
    We cannot solve this in closed form. However, from the following proposition, we know that $\xi(m)$ has a unique root.

    \begin{proposition}\label{prop:xi_root}
        $\xi$ has a unique root on $\mathbb{R}_{++}^{}$.
    \end{proposition}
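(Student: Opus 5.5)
We need to show that $\xi(m)=m^{d/2+2}-\rho K^{1/d}m-\rho\sqrt{N/3}\,K^{2/d}$ has exactly one positive root. Write $a\triangleq\rho K^{1/d}>0$ and $c\triangleq\rho\sqrt{N/3}\,K^{2/d}>0$, so that $\xi(m)=m^{p}-am-c$ with $p=d/2+2>2$ for every $d\geq 1$. The plan has two parts: existence by the intermediate value theorem, and uniqueness by a convexity (or, alternatively, sign-change) argument.

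\textbf{Existence.} Since $\xi(0^{+})=-c<0$ and $\xi(m)\to+\infty$ as $m\to\infty$ (because $p>1$, the term $m^{p}$ dominates the linear and constant terms), continuity of $\xi$ on $\mathbb{R}_{++}$ and the intermediate value theorem give at least one root.

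\textbf{Uniqueness via convexity.} For $m>0$ we have $\xi''(m)=p(p-1)m^{p-2}>0$, so $\xi$ is strictly convex on $\mathbb{R}_{++}$. A strictly convex function that starts negative at $0^{+}$ can cross zero at most once: once it becomes nonnegative it must stay increasing. To make this rigorous, let $m_1<m_2$ be two roots. Write $m_1$ as a convex combination of $0$ and $m_2$, and use $\xi(0)=-c<0$, extending $\xi$ continuously to $0$. Strict convexity then gives
\[
\xi(m_1)<\tfrac{m_2-m_1}{m_2}\,\xi(0)+\tfrac{m_1}{m_2}\,\xi(m_2)<0,
\]
which contradicts $\xi(m_1)=0$.

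\textbf{Uniqueness via Descartes' rule of signs (alternative).} The coefficients of $\xi$ in decreasing powers are $(+,-,-)$, which has exactly one sign change. Descartes' rule then gives exactly one positive root. This argument applies directly only when $p$ is an integer, that is, when $d$ is even. A generalized Descartes rule does hold for real exponents, but the convexity route avoids the issue altogether.

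There is no real obstacle. The only point requiring care is that $p$ is non-integer for odd $d$, which is why I would rely on the convexity argument rather than on polynomial root-counting.
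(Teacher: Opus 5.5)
Your proof is correct and uses the same ingredients as the paper's: strict convexity of $\xi$, $\xi(0)<0$, and $\xi(m)\to\infty$. The paper obtains uniqueness by finding a $\tau$ beyond which $\xi'>0$, with $\xi<0$ on $[0,\tau]$; your chord inequality through $(0,\xi(0))$ reaches the same conclusion more directly.
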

    \begin{proof}
        %
        %
        %
        %
        %
        %
        %
        %
        We observe that $\xi$ is strictly convex on $\mathbb{R}_{+}^{}$; hence, $\xi'_{}$ is a strictly increasing function on $\mathbb{R}_{+}^{}$.
        Together with the facts that $\xi(0)<0$ and $\lim\limits_{m\to\infty}^{}\xi(m)=\infty$, this implies that there exists a unique ${\tau} \geq 0$ such that $\xi({\tau}) = \xi(0) < 0$ and $\xi'_{}(m)>0 \,\ \forall \, m>{\tau}$ (take $\tau=0$ if $\xi'_{}(0)\geq 0$).
        Thus, $\xi$ is strictly increasing on $({\tau},\infty)$, and since $\xi(m) < 0 \,\ \forall \, m\in[0,{\tau}]$%
        , the result follows. 
    \end{proof}    

    
    The following result presents an upper and a lower bound on the optimal number of grids under mild technical conditions on $\epsilon$.

    \begin{theorem}\label{thm:M_bound}
        The optimal number of grids is bounded as follows.
        \begin{enumerate}[label=(\roman*)]
            \item If $\epsilon \geq \epsilon_0^{} \triangleq d\sqrt{\frac{3K}{8N}}\Big(1+\sqrt{\frac{N}{3}}\Big)^{\nspp-1}_{}$,  
            $M \geq 
            \left\lfloor
            \big(
                {\eta}
                K^{2/d}_{}
                \big)^{\frac{2d}{4+d}}_{}
            \right\rfloor
            $%
            , and
            \item if $\epsilon \leq 
            \epsilon_1^{} \triangleq 
            \frac{d\sqrt{2K}}{3}
            \left(\frac{4N}{3}\right)^{d/4}_{}$, 
            $M \leq 
            \left\lceil
            \big(
                {\gamma}\sqrt{3N}
                {\eta}
                K^{2/d}_{}
                \big)^{\frac{2d}{4+d}}_{}
            \right\rceil
            $%
            ,
        \end{enumerate}
        where 
        %
        ${\eta} 
        \triangleq 
        \frac{\epsilon}{d}\sqrt{\frac{8N}{3}} \left(1+\sqrt{\frac{N}{3}}\right)
        $
        and
        ${\gamma} 
        \triangleq 
        \Big(1+\sqrt{\frac{N}{3}}\Big)^{\nspp-1}_{}
        $%
        .
    \end{theorem}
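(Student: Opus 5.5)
The plan is to sandwich the unique positive root $m^\star$ of $\xi$ (Proposition~\ref{prop:xi_root}) between two explicit test points $a$ and $b$, and then pass to $M=(m^\star)^d$. From the proof of Proposition~\ref{prop:xi_root}, $\xi(m)<0$ for $0<m<m^\star$ and $\xi(m)>0$ for $m>m^\star$. Hence $\xi(a)\le 0$ implies $a\le m^\star$, and $\xi(b)\ge 0$ implies $m^\star\le b$. Raising to the $d$-th power gives $a^d\le M\le b^d$. Since the number of grids is an integer, this yields $M\ge\lfloor a^d\rfloor$ and $M\le\lceil b^d\rceil$, which are exactly the stated forms once $a^d=(\eta K^{2/d})^{\frac{2d}{4+d}}$ and $b^d=(\gamma\sqrt{3N}\,\eta K^{2/d})^{\frac{2d}{4+d}}$. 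So I choose $a$ and $b$ by $a^{d/2+2}=\eta K^{2/d}$ and $b^{d/2+2}=\gamma\sqrt{3N}\,\eta K^{2/d}$, using $\frac{d}{2}+2=\frac{d+4}{2}$.

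\emph{Lower bound (i).} Note $\eta=\rho\big(1+\sqrt{N/3}\big)$. Then
\[
a^{d/2+2}=\rho K^{2/d}+\rho\sqrt{N/3}\,K^{2/d},
\]
and the last term cancels against the constant term of $\xi$. This leaves $\xi(a)=\rho K^{1/d}\big(K^{1/d}-a\big)$, which is nonpositive iff $a\ge K^{1/d}$. Raising both sides to the power $\frac{d+4}{2}$, this is equivalent to $\eta K^{2/d}\ge K^{\frac12+\frac2d}$, i.e.\ $\eta\ge\sqrt K$. Unwinding the definition of $\eta$, this is precisely $\epsilon\ge\epsilon_0$.

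\emph{Upper bound (ii).} Here $\gamma\eta=\rho$, so $b^{d/2+2}=\rho\sqrt{3N}\,K^{2/d}$. Using $\sqrt{3N}-\sqrt{N/3}=2\sqrt{N/3}$, we get
\[
\xi(b)=\rho K^{1/d}\big(2\sqrt{N/3}\,K^{1/d}-b\big),
\]
which is nonnegative iff $b\le 2\sqrt{N/3}\,K^{1/d}$. Raising to the power $\frac{d+4}{2}$, this becomes
\[
\rho\sqrt{3N}\le \left(\tfrac{4N}{3}\right)^{\frac{d+4}{4}}\sqrt K .
\]
Substituting $\rho\sqrt{3N}=\frac{\epsilon}{d}\sqrt8\,N$ and simplifying $(4N/3)/(\sqrt8 N)=\frac{\sqrt2}{3}$ gives exactly $\epsilon\le\epsilon_1$.

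All the manipulations are monotone power transformations of positive quantities, so the equivalences are valid. The main obstacle is not analytic but bookkeeping. First, one must notice that the particular constants $\eta,\gamma$ are chosen so that one term of $\xi$ cancels exactly at each test point, reducing the sign check to a comparison of $a$ or $b$ with a multiple of $K^{1/d}$. Second, one must handle the integer rounding carefully. Here I would argue that the optimal integer $m$ lies within one of the real minimizer by convexity of $h$, and justify the floor and ceiling as the integer relaxation of $a^d\le M\le b^d$.
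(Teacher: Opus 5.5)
Your proposal is correct and is essentially the paper's proof. It uses the same test points $m_l=(\eta K^{2/d})^{\frac{2}{4+d}}$ and $m_u=(\gamma\sqrt{3N}\,\eta K^{2/d})^{\frac{2}{4+d}}$, the same exact cancellation of the constant term of $\xi$ (you write $\rho$ where the paper writes $\eta\gamma$), the same reduction of the sign conditions to $\eta\ge\sqrt{K}$ and its upper-bound analogue, and the same appeal to the sign structure of $\xi$ from Proposition~\ref{prop:xi_root}.

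One small caution about rounding. As in the paper, the floor and ceiling already follow directly from $m_l^d\le (m^\star)^d\le m_u^d$. Your suggested rounding argument via convexity of $h$ would only give $\lfloor m_l\rfloor^d\le M\le\lceil m_u\rceil^d$, which is weaker than the stated bounds, so do not use it as the justification.
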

    \begin{proof}
        Note that
        we can 
        rewrite 
        %
        %
        $\xi(m) = m^{d/2+2}_{} - {\eta} {\gamma} K^{1/d}_{} m - {\eta}(1-{\gamma}) K^{2/d}_{} $.
        Let 
        $m_l^{}=  \big({\eta} K^{2/d}_{}\big)^{\frac{2}{4+d}}_{}$; 
        observing 
        that 
        $\epsilon \geq \epsilon_0^{} 
        \iff 
        {\eta} \geq \sqrt{K}$%
        , we 
        deduce
        $
        \xi(m_l^{}) 
        = 
        {\eta} K^{2/d}_{} - {\eta} {\gamma} K^{1/d}_{} \big( {\eta} K^{2/d}_{} \big)^{\frac{2}{4+d}}_{} - {\eta}(1-{\gamma})K^{2/d}_{}
        = 
        {\eta} {\gamma} K^{1/d}_{} \left( K^{1/d}_{} - \big( {\eta} K^{2/d}_{} \big)^{\frac{2}{4+d}}_{} \right) 
        \leq 
        {\eta} {\gamma} K^{1/d}_{} \Big( K^{1/d}_{} - \big( K^{1/2 + 2/d}_{} \big)^{\frac{2}{4+d}}_{} \Big) 
        = 
        0
        $.
        %
        %
        %
        %
        %
        Also,
        taking 
        $m_u^{}=  \big({\gamma}\sqrt{3N}{\eta} K^{2/d}_{}\big)^{\frac{2}{4+d}}_{}$ 
        and noting 
        $\epsilon \leq \epsilon_1^{} 
        \iff 
        {\eta} \leq \frac{2\sqrt{K}}{3{\gamma}} \left(\frac{4N}{3}\right)^{d/4+1/2}_{}$,  
        we 
        have
        $
        \xi(m_u^{}) 
        = 
        {\gamma}\sqrt{3N} {\eta} K^{2/d}_{} - {\eta} {\gamma} K^{1/d}_{} \big( {\gamma}\sqrt{3N} {\eta} K^{2/d}_{} \big)^{\frac{2}{4+d}}_{} - {\eta}(1-{\gamma})K^{2/d}_{}
        = 
        {\eta} {\gamma} K^{1/d}_{} \left( 2\sqrt{\frac{N}{3}} K^{1/d}_{} - \big({\gamma}\sqrt{3N} {\eta} K^{2/d}_{} \big)^{\frac{2}{4+d}}_{} \right) 
        \geq 
        {\eta} {\gamma} K^{1/d}_{} \left( 2\sqrt{\frac{N}{3}} K^{1/d}_{} - \left( \left(\frac{4N}{3}\right)^{d/4+1}_{} K^{1/2 + 2/d}_{} \right)^{\frac{2}{4+d}}_{} \right) 
        = 
        0
        $.
        %
        %
        %
        %
        %
        %
        %
        %
        By invoking Proposition \ref{prop:xi_root}, 
        we get 
        $M \geq \big\lfloor m_l^{d} \big\rfloor$ and $M \leq  \big\lceil m_u^{d} \big\rceil $
        under respective conditions.
    \end{proof}


    Table \ref{tab:results_eps0_eps1} provides the ranges of $\epsilon$ 
    for which the aforementioned bounds hold. 
    %
    It
    is evident that 
    the optimal grid size is 
    \begin{equation}
    M = \Theta\Big( (N \epsilon)^{\frac{2d}{4+d}} \, K^{\frac{4}{4+d}} \Big)
    \end{equation}
    for acceptable values of $\epsilon$ \cite{nist800226}. 
    Unlike EUGkM, this grid size depends explicitly on $K$ and have different exponents for 
    $N$
    and
    $\epsilon$.
    Also, 
    note that
    the grid size is independent of $r$.

    \begin{table}[h]
        \centering
        \caption{
        Ranges of ${\epsilon}$ 
        for which the bounds on $M$ in Theorem \ref{thm:M_bound} hold.%
        }
        \label{tab:results_eps0_eps1}
        \setlength{\extrarowheight}{1.5pt}
        \setlength{\tabcolsep}{4.5pt}        
        \setlength{\aboverulesep}{0pt}
        \setlength{\belowrulesep}{0pt}
        \setlength{\cmidrulekern}{0pt}
    
        
        \begin{tabular}{C{0.75em} ? C{4.25em} ? *{2}{C{4.25em}| C{4.25em} ?}}
        
            %
    
            \multicolumn{2}{c}{}
            & 
            \multicolumn{2}{c}{
            $d=2 \,$}
            & 
            \multicolumn{2}{c}{
            $d=3 \,$}
            \\
            %
            
            \cmidrule[1pt]{2-6}
            
            \multicolumn{1}{c?}{}  & 
            \begin{tabular}{c} $\boldsymbol{N}$\\[-0.1ex] \end{tabular}  &  
            \begin{tabular}{c} $\mbf{\epsilon_0^{}}$\\[0.55ex] \end{tabular}  &
            \begin{tabular}{c} $\mbf{\epsilon_1^{}}$\\[0.55ex] \end{tabular}  &
            \begin{tabular}{c} $\mbf{\epsilon_0^{}}$\\[0.55ex] \end{tabular}  & 
            \begin{tabular}{c} $\mbf{\epsilon_1^{}}$\\[0.55ex] \end{tabular} 
            \\
    
            \cmidrule[1pt]{2-6}
            \specialrule{0pt}{1pt}{0pt}	
            \cmidrule[1pt]{2-6}
    
            \multirow{3}{*}{\spheading{\centering{
            $\,\,\ \ K=2$}}} 
            
            &
            100
            &
            0.026  &  15.396
            & 
            0.039  &  78.475
            \\
            
            \cmidrule{2-6}
            
            &
            200
            &
            0.014  &  21.773
            &
            0.021  &  131.979
            \\
            
            \cmidrule{2-6}
            
            &
            400
            &
            0.007  &  30.792
            &
            0.011  &  221.962
            \\
            
            \cmidrule[1pt]{2-6}
            \specialrule{0pt}{1pt}{0pt}	
            \cmidrule[1pt]{2-6}
    
            \multirow{3}{*}{\spheading{\centering{
            $\,\,\ \ K=4$}}} 
            
            &
            200 
            &
            0.019  &  30.792
            &
            0.029  &  186.647
            \\
            
            \cmidrule{2-6}
            
            &
            400
            &
            0.010  &  43.546
            &
            0.015  &  313.901
            \\
            
            \cmidrule{2-6}
            
            &
            800 
            &
            0.005  &  61.584
            &
            0.008  &  527.918
            \\
    
            \cmidrule[1pt]{2-6}
            \specialrule{0pt}{1pt}{0pt}	
            \cmidrule[1pt]{2-6}
    
            \multirow{3}{*}{\spheading{\centering{
            $\,\,\ \ K=8$}}} 
            
            &
            400 
            &
            0.014  &  61.584
            &
            0.021  &  443.924
            \\
            
            \cmidrule{2-6}
            
            &
            800
            & 
            0.008  &  87.092
            &
            0.011  &  746.588
            \\
            
            \cmidrule{2-6}
            
            &
            1600
            & 
            0.004  &  123.168
            &
            0.006  &  1255.607
            \\
            
            \cmidrule[1pt]{2-6}
        \end{tabular}
    \end{table} 

    

    \section{Simulation Results}\label{sec:simulation}

    We evaluate the performance of our proposed scheme, RUGNIK, against several state-of-the-art DP $K$-means clustering methods, namely, EUGkM \cite{su2017differentially}, DPLloyd implementation in IBM’s \texttt{diffprivlib} \cite{holohan2019diffprivlib}, and Google’s coreset-based clustering (GCC) \cite{chang2021practical}. 
    All methods are tested on synthetic datasets generated as described in \cite{su2017differentially} for a variety of configurations.

    Table \ref{tab:results_wcss} summarizes the average WCSS achieved by different techniques across these settings for a wide range of privacy budgets $\epsilon$. Lower WCSS indicates better clustering quality. The results show that our 
    proposed 
    scheme consistently outperforms all baseline schemes across all considered settings. 
    Furthermore, RUGNIK offers a significant improvement over EUGkM that becomes more pronounced in regimes with stronger privacy constraints (i.e., smaller $\epsilon$) and larger numbers of clusters, where the choice of grid resolution plays a more significant role. 

    Table \ref{tab:results_n_grids} compares the number of grids $M$ selected by EUGkM and our proposed scheme. It is evident that EUGkM, due to its independence from the number of clusters $K$, results in degenerate choices in high-privacy regimes, where the number of grids is smaller than the number of clusters ($M < K$, marked with \MleqK), 
    thereby 
    rendering the clustering process illogical as the grid size is too coarse to distinguish the required number of 
    clusters%
    . 
    In contrast, RUGNIK appropriately scales the grid resolution by explicitly accounting for $K$, ensuring 
    high clustering accuracy 
    even under strong privacy constraints.

    \begin{table*}
        \centering
        \caption{
        Within-cluster sum of squares (WCSS) achieved by various methods.%
        }
        \label{tab:results_wcss}
        \setlength{\extrarowheight}{1.5pt}
        \setlength{\tabcolsep}{1pt}        
        \setlength{\aboverulesep}{0pt}
        \setlength{\belowrulesep}{0pt}
        \setlength{\cmidrulekern}{0pt}
    
        
        \begin{tabular}{ C{1.18em}  ? C{5.1em} | C{4.845em}
        +
        *{5}{C{4em}|} C{4em}
        +
        *{5}{C{4em}|} C{4em} ?}
    
            %
    
            \multicolumn{3}{c}{}
            & 
            \multicolumn{6}{c}{
            $d\!=\!2 \:\,$}
            &
            \multicolumn{6}{c}{
            $d\!=\!3 \,$}
            \\
            %
            
            \cmidrule[1pt]{2-15}
            
            & \multicolumn{2}{c+}{\textbf{Privacy budget, $\mbf{\epsilon}$}} & 
            \textbf{0.1} & \textbf{0.15} & \textbf{0.25} & \textbf{0.4} & \textbf{0.6} & \textbf{1.0} 
            &
            \textbf{0.1} & \textbf{0.15} & \textbf{0.25} & \textbf{0.4} & \textbf{0.6} & \textbf{1.0}
            \\
    
            \cmidrule[1pt]{2-15}
            \specialrule{0pt}{1pt}{0pt}	
            \cmidrule[1pt]{2-15}
    
            \multirow{15}{*}{\spheading{\centering{
            $K\!=\!2 \ \ \ $}}} 
            &
            
            \multirow{5}{*}{$N\!=\!100$} 
            
            & \textbf{IBM} & 
            28.556  &  28.014  &  26.775  &  24.757  &  22.498  &  18.500 
            & 
            45.597  &  44.860  &  43.051  &  40.357  &  36.978  &  30.918
            \\
            
            \cmidrule{3-15} 
            
            && \textbf{GCC} & 
            15.035  &  14.772  &  14.233  &  13.434  &  12.486  &  10.950 
            & 
            19.990  &  19.673  &  19.044  &  18.092  &  16.897  &  14.952  
            \\
            
            \cmidrule{3-15} 
            
            && \textbf{EUGkM} & 
            16.523  &  16.523  &  3.968  &  3.621  &  3.509  &  2.479
            & 
            21.355  &  21.355  &  21.355  &  6.931  &  6.413  &  \textbf{3.797}
            \\
            
            \cmidrule{3-15}
            
            && \textbf{RUGNIK} & 
            \textbf{6.114}  &  \textbf{6.002}  &  \textbf{3.824}  &  \textbf{2.417}  &  \textbf{1.862}  &  \textbf{1.537}
            & 
            \textbf{13.091}  &  \textbf{10.604}  &  \textbf{8.214}  &  \textbf{6.286}  &  \textbf{4.651}  &  3.800
            \\
            
            \cmidrule{3-15}
            
            && \textbf{No priv.} & 
            \multicolumn{6}{c+}{1.249}
            & 
            \multicolumn{6}{c?}{1.900}
            \\
            
            \cmidrule[1pt]{2-15}
            
            & \multirow{5}{*}{$N\!=\!200$} 
            
            & \textbf{IBM} & 
            54.555  &  52.172  &  47.163  &  41.049  &  33.716  &  23.594
            &
            87.687  &  84.612  &  77.367  &  67.751  &  57.027  &  42.012
            \\
            
            \cmidrule{3-15} 
            
            && \textbf{GCC} & 
            29.165  &  28.144  &  26.179  &  23.632  &  20.849  &  16.854
            &
            38.866  &  37.671  &  35.308  &  32.112  &  28.565  &  23.281
            \\
            
            \cmidrule{3-15} 
            
            && \textbf{EUGkM} & 
            33.190  &  7.483  &  6.977  &  5.010  &  4.882  &  2.927
            &
            42.708  &  14.980  &  12.883  &  12.182  &  7.294  &  6.885
            \\
            
            \cmidrule{3-15}
            
            && \textbf{RUGNIK} &
            \textbf{9.077}  &  \textbf{6.682}  &  \textbf{4.041}  &  \textbf{3.253}  &  \textbf{2.930}  &  \textbf{2.696}
            &
            \textbf{18.004}  &  \textbf{14.964}  &  \textbf{10.428}  &  \textbf{8.089}  &  \textbf{5.697}  &  \textbf{4.838}
            \\
            
            \cmidrule{3-15}
            
            && \textbf{No priv.} & 
            \multicolumn{6}{c+}{2.458}
            & 
            \multicolumn{6}{c?}{3.838}
            \\
            
            \cmidrule[1pt]{2-15}
            
            & \multirow{5}{*}{$N\!=\!400$} 
            
            & \textbf{IBM} & 
            99.167  &  90.287  &  74.488  &  56.616  &  40.557  &  23.960
            &
            161.077  &  148.092  &  124.883  &  98.466  &  74.151  &  46.804
            \\
            
            \cmidrule{3-15} 
            
            && \textbf{GCC} & 
            54.828  &  51.176  &  44.937  &  37.717  &  31.434  &  33.692
            &
            73.927  &  69.408  &  61.617  &  52.404  &  43.356  &  36.734
            \\
            
            \cmidrule{3-15} 
            
            && \textbf{EUGkM} & 
            14.302  &  13.848  &  9.766  &  6.054  &  5.310  &  \textbf{5.088}
            &
            27.464  &  25.472  &  \textbf{14.949}  &  13.869  &  9.390  &  8.966
            \\
            
            \cmidrule{3-15}
            
            && \textbf{RUGNIK} &
            \textbf{9.576}  &  \textbf{7.434}  &  \textbf{6.030}  &  \textbf{5.504}  &  \textbf{5.283}  &  5.112
            &
            \textbf{24.739}  &  \textbf{18.323}  &  14.962  &  \textbf{10.228}  &  \textbf{9.389}  &  \textbf{8.447}
            \\
            
            \cmidrule{3-15}
            
            && \textbf{No priv.} & 
            \multicolumn{6}{c+}{4.955}
            & 
            \multicolumn{6}{c?}{7.690}
            \\
            
            \cmidrule[1pt]{2-15}
            \specialrule{0pt}{1pt}{0pt}	
            \cmidrule[1pt]{2-15}
    
            \multirow{15}{*}{\spheading{\centering{
            $K\!=\!4 \ \ \ $}}} 
            & 
            
            \multirow{5}{*}{$N\!=\!200$} 
            
            & \textbf{IBM} &
            16.669  &  16.408  &  15.917  &  15.092  &  14.043  &  12.053
            &
            30.666  &  30.231  &  29.297  &  27.748  &  25.788  &  22.417
            \\
            
            \cmidrule{3-15} 
            
            && \textbf{GCC} &
            18.926  &  18.436  &  17.558  &  16.442  &  15.302  &  13.607
            &
            22.871  &  22.414  &  21.550  &  20.348  &  19.042  &  17.023
            \\
            
            \cmidrule{3-15} 
            
            && \textbf{EUGkM} &
            21.808  &  4.533  &  4.533  &  2.660  &  2.594  &  1.423
            &
            25.145  &  \textbf{8.140}  &  7.620  &  7.371  &  3.981  &  3.685
            \\
            
            \cmidrule{3-15}
            
            && \textbf{RUGNIK} & 
            \textbf{4.508}  &  \textbf{3.249}  &  \textbf{2.000}  &  \textbf{1.357}  &  \textbf{1.032}  &  \textbf{0.844}
            &
            \textbf{8.849}  &  8.309  &  \textbf{5.920}  &  \textbf{4.493}  &  \textbf{3.030}  &  \textbf{2.184}
            \\
            
            \cmidrule{3-15}
            
            && \textbf{No priv.} & 
            \multicolumn{6}{c+}{0.683}
            & 
            \multicolumn{6}{c?}{1.319}
            \\
            
            \cmidrule[1pt]{2-15}
            
            & \multirow{5}{*}{$N\!=\!400$} 
            
            & \textbf{IBM} &
            32.278  &  31.015  &  29.130  &  26.104  &  22.679  &  17.309
            &
            58.917  &  57.131  &  53.059  &  47.888  &  42.125  &  33.462
            \\
            
            \cmidrule{3-15} 
            
            && \textbf{GCC} &
            36.170  &  34.569  &  31.911  &  28.971  &  26.022  &  21.663
            &
            43.900  &  42.285  &  39.422  &  35.987  &  32.513  &  27.311
            \\
            
            \cmidrule{3-15} 
            
            && \textbf{EUGkM} &
            9.133  &  9.132  &  5.218  &  3.012  &  1.890  &  1.624
            &
            15.708  &  15.125  &  8.276  &  7.443  &  4.420  &  4.082
            \\
            
            \cmidrule{3-15}
        
            && \textbf{RUGNIK} &
            \textbf{5.259}  &  \textbf{3.319}  &  \textbf{2.297}  &  \textbf{1.797}  &  \textbf{1.654}  &  \textbf{1.515}
            &
            \textbf{13.693}  &  \textbf{10.488}  &  \textbf{7.124}  &  \textbf{5.099}  &  \textbf{3.884}  &  \textbf{3.345}
            \\
            
            \cmidrule{3-15}
            
            && \textbf{No priv.} & 
            \multicolumn{6}{c+}{1.382}
            & 
            \multicolumn{6}{c?}{2.662}
            \\
            
            \cmidrule[1pt]{2-15}
            
            & \multirow{5}{*}{$N\!=\!800$} 
            
            & \textbf{IBM} & 
            60.079  &  55.842  &  48.604  &  39.834  &  30.918  &  20.287
            &
            109.706  &  102.227  &  89.782  &  75.045  &  60.635  &  43.174
            \\
            
            \cmidrule{3-15} 
            
            && \textbf{GCC} & 
            66.463  &  61.989  &  55.171  &  47.790  &  40.181  &  35.657
            &
            82.116  &  76.997  &  68.908  &  59.966  &  50.976  &  39.202
            \\
            
            \cmidrule{3-15} 
            
            && \textbf{EUGkM} & 
            10.938  &  10.649  &  6.087  &  3.350  &  3.068  &  \textbf{2.930}
            &
            28.825  &  15.879  &  14.798  &  8.231  &  6.405  &  \textbf{5.853}
            \\
            
            \cmidrule{3-15}
            
            && \textbf{RUGNIK} &
            \textbf{5.481}  &  \textbf{4.153}  &  \textbf{3.427}  &  \textbf{3.166}  &  \textbf{3.051}  &  2.947
            &
            \textbf{18.022}  &  \textbf{12.075}  &  \textbf{9.173}  &  \textbf{6.985}  &  \textbf{6.336}  &  5.865
            \\
            
            \cmidrule{3-15}
            
            && \textbf{No priv.} & 
            \multicolumn{6}{c+}{2.815}
            & 
            \multicolumn{6}{c?}{5.367}
            \\
    
            \cmidrule[1pt]{2-15}
            \specialrule{0pt}{1pt}{0pt}	
            \cmidrule[1pt]{2-15}
    
            \multirow{15}{*}{\spheading{\centering{
            $K\!=\!8 \ \ \ $}}} 
            & 
    
            \multirow{5}{*}{$N\!=\!400$} 
            
            & \textbf{IBM} & 
            8.981  &  8.903  &  8.664  &  8.343  &  7.898  &  7.063
            &
            20.068  &  19.835  &  19.366  &  18.585  &  17.603  &  15.829
            \\
            
            \cmidrule{3-15} 
            
            && \textbf{GCC} & 
            20.018  &  19.134  &  17.668  &  16.065  &  14.540  &  12.479
            &
            26.408  &  25.543  &  24.034  &  22.298  &  20.531  &  18.073
            \\
            
            \cmidrule{3-15} 
            
            && \textbf{EUGkM} & 
            6.408  &  6.408  &  2.431  &  2.050  &  1.331  &  0.874
            &
            8.361  &  8.352  &  5.347  &  5.044  &  3.187  &  2.986
            \\
            
            \cmidrule{3-15}
            
            && \textbf{RUGNIK} & 
            \textbf{2.723}  &  \textbf{1.944}  &  \textbf{1.268}  &  \textbf{0.785}  &  \textbf{0.601}  &  \textbf{0.481}
            &
            \textbf{6.975}  &  \textbf{6.383}  &  \textbf{4.564}  &  \textbf{3.341}  &  \textbf{2.395}  &  \textbf{1.721}
            \\
            
            \cmidrule{3-15}
            
            && \textbf{No priv.} & 
            \multicolumn{6}{c+}{0.362}
            & 
            \multicolumn{6}{c?}{0.946}
            \\
            
            \cmidrule[1pt]{2-15}
            
            & \multirow{5}{*}{$N\!=\!800$} 
            
            & \textbf{IBM} & 
            17.419  &  16.989  &  16.176  &  14.889  &  13.410  &  10.952
            &
            38.841  &  37.896  &  36.046  &  33.368  &  30.134  &  24.807
            \\
            
            \cmidrule{3-15} 
            
            && \textbf{GCC} & 
            37.018  &  34.490  &  30.835  &  27.167  &  23.624  &  18.502
            &
            49.828  &  47.180  &  43.113  &  38.757  &  34.549  &  28.448
            \\
            
            \cmidrule{3-15} 
            
            && \textbf{EUGkM} & 
            4.800  &  4.783  &  4.253  &  1.761  &  1.280  &  0.835
            &
            16.616  &  10.258  &  9.796  &  5.923  &  3.607  &  2.630
            \\
            
            \cmidrule{3-15}
            
            && \textbf{RUGNIK} & 
            \textbf{3.082}  &  \textbf{2.012}  &  \textbf{1.322}  &  \textbf{0.972}  &  \textbf{0.865}  &  \textbf{0.794}
            &
            \textbf{10.532}  &  \textbf{8.159}  &  \textbf{5.461}  &  \textbf{3.888}  &  \textbf{3.074}  &  \textbf{2.524}
            \\
            
            \cmidrule{3-15}
            
            && \textbf{No priv.} & 
            \multicolumn{6}{c+}{0.701}
            & 
            \multicolumn{6}{c?}{1.847}
            \\
            
            \cmidrule[1pt]{2-15}
            
            & \multirow{5}{*}{$N\!=\!1600$} 
            
            & \textbf{IBM} & 
            33.091  &  31.331  &  28.304  &  24.307  &  20.137  &  14.435
            &
            73.421  &  69.868  &  63.186  &  54.705  &  45.614  &  33.253
            \\
            
            \cmidrule{3-15} 
            
            && \textbf{GCC} & 
            65.587  &  59.517  &  51.077  &  42.144  &  33.696  &  27.025
            &
            90.567  &  83.593  &  73.680  &  63.118  &  52.809  &  39.647
            \\
            
            \cmidrule{3-15} 
            
            && \textbf{EUGkM} & 
            8.351  &  5.227  &  3.454  &  2.144  &  1.595  &  \textbf{1.497}
            &
            20.103  &  12.695  &  11.826  &  6.868  &  5.071  &  \textbf{4.097}
            \\
            
            \cmidrule{3-15}
            
            && \textbf{RUGNIK} &
            \textbf{3.060}  &  \textbf{2.286}  &  \textbf{1.869}  &  \textbf{1.662}  &  \textbf{1.587}  &  1.539
            &
            \textbf{13.341}  &  \textbf{9.611}  &  \textbf{6.609}  &  \textbf{5.085}  &  \textbf{4.564}  &  4.136
            \\
            
            \cmidrule{3-15}
            
            && \textbf{No priv.} & 
            \multicolumn{6}{c+}{1.409}
            & 
            \multicolumn{6}{c?}{3.718}
            \\
            
            \cmidrule[1pt]{2-15}
        \end{tabular}
    \end{table*}

    \begin{table*}
        \centering
        \caption{
        Number of grids selected by {\normalfont EUGkM} and RUGNIK.%
        }
        \label{tab:results_n_grids}
        \setlength{\extrarowheight}{1.5pt}
        \setlength{\tabcolsep}{4.5pt}        
        \setlength{\aboverulesep}{0pt}
        \setlength{\belowrulesep}{0pt}
        \setlength{\cmidrulekern}{0pt}
    
        
        \begin{tabular}{C{0.75em}  ? c | C{4em}
        +
        *{5}{C{2em}|} C{2em}
        +
        *{5}{C{2em}|} C{2em} ?}
    
            %
    
            \multicolumn{3}{c}{}
            & 
            \multicolumn{6}{c}{
            $d=2 \:\,$}
            & 
            \multicolumn{6}{c}{
            $d=3 \,$}
            \\
            %
            
            \cmidrule[1pt]{2-15}
            
            & \multicolumn{2}{c+}{\textbf{Privacy budget, $\mbf{\epsilon}$}} & 
            \textbf{0.1} & \textbf{0.15} & \textbf{0.25} & \textbf{0.4} & \textbf{0.6} & \textbf{1.0} 
            &
            \textbf{0.1} & \textbf{0.15} & \textbf{0.25} & \textbf{0.4} & \textbf{0.6} & \textbf{1.0}
            \\
    
            \cmidrule[1pt]{2-15}
            \specialrule{0pt}{1pt}{0pt}	
            \cmidrule[1pt]{2-15}
    
            \multirow{6}{*}{\spheading{\centering{
            $K=2 \;\ $}}} 
            &
            
            \multirow{2}{*}{$N\!=\!100$} 
            
            & \textbf{EUGkM} & 
            1\MleqK  &  1\MleqK  &  4  &  4  &  4  &  9 
            & 
            1\MleqK  &  1\MleqK  &  1\MleqK  &  8  &  8  &  27
            \\
            
            \cmidrule{3-15}
            
            && \textbf{RUGNIK} & 
            4  &  9  &  9  &  16  &  16  &  25
            & 
            8  &  8  &  8  &  27  &  27  &  27
            \\
            
            \cmidrule[1pt]{2-15}
            
            & \multirow{2}{*}{$N\!=\!200$} 
            
            & \textbf{EUGkM} & 
            1\MleqK  &  4  &  4  &  9  &  9  &  16
            &
            1\MleqK  &  8  &  8  &  8  &  27  &  27
            \\
            
            \cmidrule{3-15}
            
            && \textbf{RUGNIK} &
            9  &  9  &  16  &  25  &  36  &  49
            &
            8  &  8  &  27  &  27  &  64  &  64
            \\
            
            \cmidrule[1pt]{2-15}
            
            & \multirow{2}{*}{$N\!=\!400$} 
            
            & \textbf{EUGkM} & 
            4  &  4  &  9  &  16  &  25  &  36
            &
            8  &  8  &  27  &  27  &  64  &  64
            \\
            
            \cmidrule{3-15}
            
            && \textbf{RUGNIK} &
            16  &  16  &  25  &  36  &  49  &  64
            &
            27  &  27  &  27  &  64  &  64  &  125
            \\
            
            \cmidrule[1pt]{2-15}
            \specialrule{0pt}{1pt}{0pt}	
            \cmidrule[1pt]{2-15}
    
            \multirow{6}{*}{\spheading{\centering{
            $K=4  \;\ $}}} 
            & 
            
            \multirow{2}{*}{$N\!=\!200$} 
            
            & \textbf{EUGkM} &
            1\MleqK  &  4  &  4  &  9  &  9  &  16
            &
            1\MleqK  &  8  &  8  &  8  &  27  &  27
            \\
            
            \cmidrule{3-15}
            
            && \textbf{RUGNIK} & 
            16  &  16  &  25  &  36  &  49  &  64
            &
            8  &  27  &  27  &  64  &  64  &  125
            \\
            
            \cmidrule[1pt]{2-15}
            
            & \multirow{2}{*}{$N\!=\!400$} 
            
            & \textbf{EUGkM} &
            4  &  4  &  9  &  16  &  25  &  36
            &
            8  &  8  &  27  &  27  &  64  &  64
            \\
            
            \cmidrule{3-15}
        
            && \textbf{RUGNIK} &
            25  &  25  &  36  &  49  &  81  &  100
            &
            27  &  27  &  64  &  64  &  125  &  216
            \\
            
            \cmidrule[1pt]{2-15}
            
            & \multirow{2}{*}{$N\!=\!800$} 
            
            & \textbf{EUGkM} & 
            9  &  9  &  16  &  36  &  49  &  81
            &
            8  &  27  &  27  &  64  &  125  &  216
            \\
            
            \cmidrule{3-15}
            
            && \textbf{RUGNIK} &
            36  &  36  &  64  &  81  &  121  &  169
            &
            27  &  64  &  64  &  125  &  216  &  343
            \\
    
            \cmidrule[1pt]{2-15}
            \specialrule{0pt}{1pt}{0pt}	
            \cmidrule[1pt]{2-15}
    
            \multirow{6}{*}{\spheading{\centering{
            $K=8  \;\ $}}} 
            & 
    
            \multirow{2}{*}{$N\!=\!400$} 
            
            & \textbf{EUGkM} & 
            4\MleqK  &  4\MleqK  &  9  &  16  &  25  &  36
            &
            8  &  8  &  27  &  27  &  64  &  64
            \\
            
            \cmidrule{3-15}
            
            && \textbf{RUGNIK} & 
            36  &  36  &  64  &  81  &  121  &  169
            &
            27  &  64  &  64  &  125  &  125  &  216
            \\
            
            \cmidrule[1pt]{2-15}
            
            & \multirow{2}{*}{$N\!=\!800$} 
            
            & \textbf{EUGkM} & 
            9  &  9  &  16  &  36  &  49  &  81
            &
            8  &  27  &  27  &  64  &  125  &  216
            \\
            
            \cmidrule{3-15}
            
            && \textbf{RUGNIK} & 
            49  &  64  &  100  &  121  &  169  &  256
            &
            64  &  64  &  125  &  216  &  343  &  512
            \\
            
            \cmidrule[1pt]{2-15}
            
            & \multirow{2}{*}{$N\!=\!1600$} 
            
            & \textbf{EUGkM} & 
            16  &  25  &  36  &  64  &  100  &  169
            &
            27  &  64  &  64  &  125  &  216  &  512
            \\
            
            \cmidrule{3-15}
            
            && \textbf{RUGNIK} &
            81  &  100  &  144  &  196  &  289  &  400
            &
            125  &  125  &  216  &  343  &  512  &  729
            \\
            
            \cmidrule[1pt]{2-15}
        \end{tabular}
    \end{table*} 

    
    
    \section{Conclusions}\label{sec:conclusion}

    In this work, we 
    studied the problem of non-interactive differentially private $K$-means clustering, 
    and 
    we derived the optimal 
    %
    grid resolution for uniform gridding%
    .
    Our selection rule is theoretically grounded and explicitly incorporates the number of clusters
    in the grid size%
    . The efficacy and robustness of our refined grid resolution are validated using extensive 
    empirical analysis
    under a wide range of settings%
    , and we demonstrated superior performance over the existing state-of-the-art and benchmark open-source implementations 
    such as \texttt{diffprivlib} \cite{holohan2019diffprivlib}, and Google’s coreset-based clustering \cite{chang2021practical}%
    , with significant improvements in high privacy regime. 
    Future work will focus on extending the analysis to adaptive grid structures and on 
    discrete noise mechanisms
    that will further improve performance in the high-privacy regime.

    {
    	\bibliographystyle{IEEEtran}
    	\bibliography{refs}
    }
    
    \end{document}